\def\isarxiv{1}

\def\paperTitle{Tensor Hinted Mv Conjectures}

\def\paperAuthor{
Zhao Song\thanks{\texttt{magic.linuxkde@gmail.com}. The author would like to thank Jan van den Brand for helpful discussions on formulating this conjecture.} % UC Berkeley.}
}

\ifdefined\isarxiv
\documentclass[11pt]{article}
\usepackage[numbers]{natbib}
\else
\documentclass{article} % For LaTeX2e
\usepackage{iclr2026_conference,times}

\fi

\ifdefined\isarxiv

\usepackage{amsmath}
\usepackage{amsthm}
\usepackage{amssymb}
\usepackage{algorithm}
\usepackage{subfig}
\usepackage{algpseudocode}
\usepackage{graphicx}
\usepackage{grffile}
\usepackage{wrapfig,epsfig}
\usepackage{url}
\usepackage{xcolor}
\usepackage{epstopdf}

\usepackage{bbm}
\usepackage{dsfont}

\else 

\usepackage{hyperref}
\usepackage{url}

\fi

\ifdefined\isarxiv

\else
\usepackage{amsmath}
\usepackage{amsthm}
\usepackage{amssymb}
\usepackage{algorithm}
\usepackage{subfig}
\usepackage{algpseudocode}
\usepackage{graphicx}
\usepackage{grffile}
\usepackage{wrapfig,epsfig}
\usepackage{url}
\usepackage{xcolor}
\usepackage{epstopdf}

\usepackage{bbm}
\usepackage{dsfont}

\fi
 
\allowdisplaybreaks

\ifdefined\isarxiv

\usepackage{tikz}
\usepackage{hyperref}  
\hypersetup{colorlinks=true,citecolor=blue,linkcolor=blue} 
\usetikzlibrary{arrows}
\usepackage[margin=1in]{geometry}

\else
 
\fi
 
\graphicspath{{./figs/}}

\theoremstyle{plain}
\newtheorem{theorem}{Theorem}%[section]
\newtheorem{lemma}[theorem]{Lemma}
\newtheorem{definition}[theorem]{Definition}

\newtheorem{conjecture}[theorem]{Conjecture}

\ifdefined\isarxiv

\else
\renewcommand\cite\citep
\fi

\newcommand{\R}{\mathbb{R}}

\begin{document}

\ifdefined\isarxiv
%%% The below part is the title and author of ArXiv version.

\date{}
\title{\paperTitle}
\author{\paperAuthor}

\else

\title{\paperTitle}

\maketitle

\fi

\ifdefined\isarxiv
 
  \maketitle
  \begin{abstract}
    Brand, Nanongkai, and Saranurak \cite{bns19} introduced a conjecture known as the Hinted Mv Conjecture. Although it was originally formulated for the matrix case, we generalize it here to the tensor setting.

  \end{abstract} 

\else

\begin{abstract}

\end{abstract}

\fi

%%% The part below is the main body and reference

%%% This file is the structure for main body content
%%% This file should only contain use \input{xxx}.
%%% TeX files for body contents should be named as:
%%% 01_xxxx.tex
%%% 02_xxxx.tex
%%% ...
%%% 49_xxxx.tex

\section{Introduction}

In \cite{bns19}, they proposed the Hinted Mv conjecture. Here we generalize it to the tensor setting.
\begin{definition}[A Tensor Version of Hinted Mv, Type I]\label{def:tensor_hinted_mv}
Consider a boolean semi-ring and a parameter $0 < \tau \le 1$. Let $k$ denote a positive integer. We define the Tensor Hinted Mv problem through these three phases:
\begin{enumerate}
    \item Input $k+1$ $n \times n$ matrices $M$ and $V_1$, $V_2$, $\cdots, V_k$.
    \item Input $k$ $n \times n$ matrices $P_1, P_2, \cdots, P_k$ with at most $n^\tau$ non-zero entries.
    \item Input a single index $i \in [n]$.
    \begin{itemize}
        \item We need to answer $M ( \oslash_{j=1}^k P_j )^\top ( \oslash_{j=1}^k V_j )_{*,i}$, here $V_{*,i} \in \mathbb{R}^n$ is the $i$-th column of matrix $V = \oslash_{j=1}^k V_j$.
    \end{itemize}
\end{enumerate} 
\end{definition}
Given matrices $K_j \in \R^{n_{\ell} \times d}$ for all $\ell \in [k]$, we define the matrix $K := K_1 \oslash K_2 \oslash \cdots \oslash K_k \in \R^{n_1 n_2 \cdots n_k \times d}$ as follows $K_{ i , j } := \prod_{\ell=1}^k (K_\ell)_{i_\ell,j} $,  $\forall i_{\ell} \in [n_{\ell}]$, here $i := \sum_{\ell=1}^{k-1} (i_{\ell} - 1) \prod_{\ell'=\ell+1}^k n_{\ell'} + i_k $. We denote $\oslash_{j=1}^k V_j := V_1 \oslash V_2 \oslash \cdots \oslash V_k$. For $k=1$,  Definition~\ref{def:tensor_hinted_mv} degenerates to \cite{bns19}(The problem is called Mv-Hinted Mv in their original paper). At first glance, the problem appears to require exponential dependence on $k$. However, by employing the tensor-trick, we demonstrate that polynomial dependence is sufficient. The core idea of the running time analysis is a tensor-trick (Lemma~\ref{lem:oslash}), i.e., $( \oslash_{j=1}^k P_j )^\top ( \oslash_{j=1}^k V_j) = \odot_{j=1}^k ( P_j^\top V_j )$. We use $\odot$ to denote Hadamard product, which is entry-wise product $(A\odot B)_{i,j} = A_{i,j} B_{i,j}$. We denote $\odot_{j=1}^k V_j := V_1 \odot V_2 \odot \cdots \odot V_k$.

Assuming polynomial preprocessing time, it remains unclear how to effectively leverage the structural information provided by $M$ and the $\{ V_{\ell} \}_{\ell \in [k]}$ matrices. Furthermore, we cannot pre-determine which specific entries of $M$ will interact with the entries of the final product $V$. Given the constraint of polynomial time, we are unable to exhaustively evaluate the exponentially large space of possible configurations for $P$. Consequently, we consider two distinct algorithmic strategies for the subsequent phases:

{\bf Method 1.} In phase 2, we just compute $M ( \oslash_{j=1}^k P_j )^\top ( \oslash_{j=1}^k V_j) $. Using Lemma~\ref{lem:oslash}, it can be written as $M ( \odot_{j=1}^k ( P_j^\top V_j )  )$.  For each $j \in [k]$, we compute $P_j^\top V_j$ this takes $n^{1+\tau}$ time, and note that $P_j^\top V_j$ has $n^{\tau}$ non-zero rows. For all the $k$ terms, we need to pay $k \times n^{1+\tau}$ time. Computing $\odot_{j=1}^k (P_j^\top V_j) $ takes $k \cdot n^{1+\tau}$ time. The matrix $\odot_{j=1}^k (P_j^\top V_j) $ also only has at most $n^{\tau}$ non-zero rows. Finally we can compute $M ( \odot_{j=1}^k (P_j^\top V_j)  )$, this takes $n^{\omega(1,1,\tau)}$. Overall it takes $O( n^{\omega(1,1,\tau)} + k n^{1+\tau} )$ time. In phase 3, we simply output the column vector that we care about.

{\bf Method 2.}  We do not compute anything in phase 2, but remember the non-zero sets for $P_j$s. In phase 3 we compute $M( \oslash_{j=1}^k P_j )^\top ( \oslash_{j=1}^k V_j)_{*,i}$. For each $j$, we use $v_j$ to denote the $i$-th column of matrix $V_j$. Then the quantity we need to care about is $M( \oslash_{j=1}^k P_j )^\top ( \oslash_{j=1}^k v_j )$. Using Lemma~\ref{lem:oslash}, the target quantity can be written as $M ( \odot_{j=1}^k P_j^\top v_j )$. For each $j \in [k]$, we compute $P_j^\top v_j$ it takes $n^{\tau}$ time. Since there are $k$ terms, it takes $k \cdot n^{\tau}$. Next, we will compute $ \odot_{j=1}^k ( P_j^\top v_j )$, this takes $k \cdot n^{\tau}$ time. Finally computing $M$ with that $n^{\tau}$ sparse vector will take $n^{1+\tau}$ time. Overall, it can be done in $O(n^{1+\tau} + k n^{\tau })$ time.

We state our conjecture as follows:
\begin{conjecture}
For any algorithm solving Tensor Hinted Mv (Definition~\ref{def:tensor_hinted_mv}, Type I), given polynomial preprocessing in Phase 1, the following lower bounds apply to at least one phase:
\begin{itemize}
\item  Phase 2 requires $\Omega(n^{\omega(1,1,\tau)-\delta} + k n^{1+\tau -\delta}  )$.
\item  Phase 3 requires $\Omega(n^{1+\tau-\delta} + k n^{\tau-\delta})$.
\end{itemize}
For every $\delta > 0$.
\end{conjecture}
For the case $k=1$, the above conjecture degenerates to Conjecture 5.7 in \cite{bns19}.
 Here we provide a proof for the tensor-trick being used in this paper.
\begin{lemma}\label{lem:oslash}
For each $\ell \in [k]$, we define $A_{\ell} \in \R^{n_{\ell} \times d_a}$. Let $n:=\prod_{\ell=1}^k n_{\ell}$. Let $A := ( \oslash_{\ell=1}^k A_{\ell} ) \in \R^{ n \times d_a}$. Let $B := ( \oslash_{\ell=1}^k B_{\ell} ) \in \R^{ n \times d_b}$. 
 We define $C \in \R^{d_a \times d_b}$ as $C := A^\top B$.
 We define $ C_{\ell} := A_{\ell}^\top B_{\ell}$. 
Then, we have $\odot_{\ell=1}^k C_{\ell} = C$.

\end{lemma}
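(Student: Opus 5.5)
The plan is to verify the identity entrywise, directly from the definition of $\oslash$ given before Definition~\ref{def:tensor_hinted_mv}. Throughout I assume, as the statement implicitly does, that $B_\ell \in \R^{n_\ell \times d_b}$ for each $\ell \in [k]$. This makes $A$ and $B$ share the same row index set of size $n = \prod_\ell n_\ell$, and makes each $C_\ell = A_\ell^\top B_\ell \in \R^{d_a \times d_b}$. So both sides of the claimed identity are $d_a \times d_b$ matrices.

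Fix $p \in [d_a]$ and $q \in [d_b]$. By definition of matrix product,
\begin{align*}
C_{p,q} = \sum_{i=1}^{n} A_{i,p} B_{i,q}.
\end{align*}
The key step is to reindex the sum over $i \in [n]$ as a sum over tuples $(i_1,\dots,i_k) \in [n_1]\times\cdots\times[n_k]$. This uses the map $i = \sum_{\ell=1}^{k-1}(i_\ell - 1)\prod_{\ell'=\ell+1}^k n_{\ell'} + i_k$, which is the mixed-radix representation of $i-1$ shifted by one. I will note that this map is a bijection onto $[n]$; that is the only point needing a moment's justification, and it follows from uniqueness of mixed-radix expansions.

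Under this bijection, the definition of $\oslash$ gives $A_{i,p} = \prod_{\ell} (A_\ell)_{i_\ell,p}$ and $B_{i,q} = \prod_\ell (B_\ell)_{i_\ell,q}$. Hence
\begin{align*}
C_{p,q} = \sum_{i_1=1}^{n_1}\cdots\sum_{i_k=1}^{n_k} \prod_{\ell=1}^k (A_\ell)_{i_\ell,p}(B_\ell)_{i_\ell,q}.
\end{align*}
Since the $\ell$-th factor depends only on $i_\ell$, the sum of products factors as a product of sums (distributivity):
\begin{align*}
C_{p,q} = \prod_{\ell=1}^k \sum_{i_\ell=1}^{n_\ell} (A_\ell)_{i_\ell,p}(B_\ell)_{i_\ell,q} = \prod_{\ell=1}^k (C_\ell)_{p,q} = \Big(\odot_{\ell=1}^k C_\ell\Big)_{p,q}.
\end{align*}
As $p,q$ were arbitrary, this proves $\odot_{\ell=1}^k C_\ell = C$.

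Alternatively, one could induct on $k$ using the associativity of $\oslash$ and treat only $k=2$. However, the direct factorization above is cleaner, and I expect no real obstacle beyond stating the bijection carefully. The same argument works verbatim over the boolean semiring, since only distributivity and commutativity are used; this matters for the application in Definition~\ref{def:tensor_hinted_mv}.
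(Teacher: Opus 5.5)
Your proof is correct and takes essentially the same route as the paper's. The paper writes $C=\sum_i A_{i,*}B_{i,*}^\top$ as a sum of outer products indexed by tuples $(i_1,\dots,i_k)$ and uses $(\odot_\ell a_{\ell,i_\ell})(\odot_\ell b_{\ell,i_\ell})^\top=\odot_\ell(a_{\ell,i_\ell}b_{\ell,i_\ell}^\top)$; it then factors the sum into a Hadamard product of sums, which is your distributivity step done at the matrix level rather than entrywise, and your remarks on the mixed-radix bijection and on $B_\ell\in\R^{n_\ell\times d_b}$ supply details the paper leaves implicit.
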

\begin{proof}
For each $i_{\ell} \in [n_{\ell}]$, let $a_{\ell,i_{\ell}}^\top$ denote the $i_{\ell}$-th row of $A_\ell$. Similarly, we define $b$ for $B$. Then, we can write $C \in \R^{d_a \times d_b}$ as $C  = A^\top B  =  \sum_{i=1}^{n} A_{i,*}  ( B_{i,*} )^\top  =  \sum_{i_1,i_2,\cdots,i_k}  ( \odot_{\ell=1}^k a_{\ell,i_{\ell}} ) \cdot (\odot_{\ell=1}^k b_{\ell,i_{\ell}}   )^\top
 = \sum_{i_1,i_2,\cdots,i_k} \odot_{\ell=1}^k ( a_{\ell,i_{\ell}} b_{\ell,i_{\ell}}^\top )=\odot_{\ell=1}^k ( \sum_{i_{\ell}=1}^{n_{\ell}} a_{\ell,i_{\ell}} b_{\ell,i_{\ell}}^\top ) = \odot_{\ell=1}^k C_{\ell}$. Thus, we can conclude $ C = \odot_{\ell=1}^k C_{\ell}$.
\end{proof}

In addition to Definition~\ref{def:tensor_hinted_mv}, we consider an alternative tensor formulation.
\begin{definition}[A Tensor Version of Hinted Mv, Type II]\label{def:tensor_hinted_mv:otimes}
Consider a boolean semi-ring and a parameter $\tau > 0$. We define the Tensor Hinted Mv problem through these three phases:
\begin{enumerate}
    \item Input $k$ $n \times d$ matrices $V_1$, $V_2$, $\cdots, V_k$.
    \item Input a $d \times d \times  \cdots \times d$ (diagonal) tensor $P$ with at most $n^\tau$ non-zero entries.
    \item Input an index set $\{ \ell_1, \cdots, \ell_s \} \subseteq [k]$ and an index multi-set $\{ i_1, \cdots, i_s \} $ where each element is from $[n]$.
    \begin{itemize}
        \item We need to answer $[ P(V_1,V_2,\cdots,V_k) ]_{\cdots,i_1,\cdots, i_s, \cdots }$, where $i_1$ is in the $\ell_1$-th direction. This applies analogously to the other $s-1$ pairs.
    \end{itemize}
\end{enumerate} 
\end{definition}

Here $P(V_1, \cdots, V_k)$ is an $n^{k}$ size tensor where $(i_1,\cdots,i_{k})$-th entry is $\sum_{j_1,\cdots, j_k} P_{j_1,\cdots,j_k} (V_1)_{i_1,j_1} \cdots (V_k)_{i_k,j_k}$. By our definition, it is not hard to check that $[ P(V_1,V_2,\cdots,V_k) ]_{\cdots,i,\cdots}$ is $P(V_1, \cdots, ( V_{\ell})_{*,i}, \cdots, V_k )$. For $k=2$,  Definition~\ref{def:tensor_hinted_mv:otimes} degenerates to \cite{bns19}. Now, let us give some straightforward methods.

{\bf Method 1.} In phase 2, we will compute $P(V_1,V_2,\cdots,V_k)$. This is actually the same as $U_1 \otimes U_2 \otimes \cdots \otimes U_k$\footnote{The $(i_1,\cdots,i_k)$-th entry of this tensor is $\sum_{j_1,\cdots,j_k} \prod_{\ell=1}^k U_{i_{\ell},j_{\ell}} $.} where $U_i$ has size $n \times n^{\tau}$. $U_1$ is the matrix generated by selecting $n^{\tau}$ columns in $V_1$ with rescaling of $V_1$ by $P$. Other $U_{\ell}$ are formed simply by selecting $n^{\tau}$ columns from $V_{\ell}$. We first compute $W_1 = \oslash_{\ell=1}^{k-\lfloor k/2\rfloor} U_{\ell}$, and then compute $W_2 = \oslash_{\ell = k-\lfloor k/2\rfloor + 1}^k U_{\ell}$. That step takes time $O( n^{k-\lfloor k/2\rfloor} )$. Next we will compute $W_1 W_2^\top$. This takes $n^{\omega(k-\lfloor k/2\rfloor, \tau, \lfloor k/2\rfloor)}$ time. In Phase 3, we will output the entries directly, since $W_1W_2^\top$ is just the matrix view of the original tensor.

{\bf Method 2.} In phase 2, we will do nothing. In phase 3, we will compute the answer directly. Without loss of generality, we can assume that $\{\ell_1, \cdots, \ell_s\}$ are just the last $s$ indices in $[k]$. The quantity of interest is essentially $U_1 \otimes \cdots \otimes U_{k-s}$ where $U_1$ is the matrix generated by  selecting $n^{\tau}$ columns (according to $P$) from $V_1$ with rescaling by $P$ as well as those $(U_{\ell_t})_{i_t,*}$ for all $t\in [s]$. Other $U_{\ell}$ are formed simply by selecting columns (according to $P$) from $V_{\ell}$ for $\ell \in \{2, \cdots, k-s\}$. We first compute $W_1 = \oslash_{\ell=1}^{k-s-\lfloor (k-s)/2\rfloor} U_{\ell}$, and then compute $W_2 = \oslash_{\ell = k-s-\lfloor (k-s)/2\rfloor + 1}^{k-s} U_{\ell}$. Those two steps take time $O( n^{k-s-\lfloor (k-s)/2\rfloor} )$. Next we will just compute $W_1 W_2^\top$. This takes $n^{\omega(k-s-\lfloor (k-s)/2\rfloor, \tau, \lfloor (k-s)/2\rfloor)}$ time. We will output $W_1W_2^\top$, since it is just the matrix view of the target tensor.

\begin{conjecture}
For any algorithm solving Tensor Hinted Mv (Definition~\ref{def:tensor_hinted_mv:otimes}, Type II), given polynomial preprocessing in Phase 1, the following lower bounds apply to at least one phase:
\begin{itemize}
\item  Phase 2 requires $\Omega(n^{\omega( k -\lfloor k/2 \rfloor , \lfloor k/2 \rfloor, \tau)-\delta}  )$.
\item  Phase 3 requires $\Omega(n^{\omega(k-s- \lfloor (k-s)/2 \rfloor, \lfloor (k-s)/2 \rfloor , \tau)-\delta} )$.
\end{itemize}
For every $\delta > 0$.
\end{conjecture}

For the case $k=2$ and $s=1$, the above conjecture degenerates to Conjecture 5.7 in \cite{bns19}.

\ifdefined\isarxiv
\bibliographystyle{alpha}
\bibliography{ref}
\else
\bibliographystyle{alpha}
\bibliography{ref}
\fi

%%% The part below is the appendix

\newpage
\onecolumn
\appendix

\end{document}